\newcommand{\comment}[1]{}
\newtheorem{theorem}{Theorem}[section]
     \newtheorem{lemma}[theorem]{Lemma}
     \newcommand{\qed}{\nobreak \ifvmode \relax \else
           \ifdim\lastskip<1.5em \hskip-\lastskip
           \hskip1.5em plus0em minus0.5em \fi \nobreak
           \vrule height0.75em width0.5em depth0.25em\fi}
\begin{document}


\title{Write-Once-Memory Codes by Source Polarization }

\author{\authorblockN{Xudong Ma \\}
\authorblockA{Pattern Technology Lab LLC, Delaware, U.S.A.\\
Email: xma@ieee.org} }

\maketitle

\begin{abstract}

We propose a new Write-Once-Memory (WOM) coding scheme based on source polarization. By applying a source polarization transformation on the to-be-determined codeword, the proposed WOM coding scheme encodes information into the bits in the high-entropy set. We prove in this paper that the proposed WOM codes are capacity-achieving. WOM codes have found many applications in modern data storage systems, such as flash memories.

\end{abstract}

\section{Introduction}

\label{section_introduction}

WOM codes have attracted much attention in the recent years, due to the fact that they have found applications in flash memories. Flash memories usually can endure limited numbers of erasure operations. For modern TLC-type flash memories, the average acceptable numbers of erasure operations can be as small as 5000. Therefore, it is desirable that each piece of flash memory may be reused to record another piece of data without using erasure operations.

If no erasure operation is used, then the flash memories become the so called ``Write-Once-Memory'' (WOM). Examples of WOM include optical disks, paper taps, and punch cards. Each paper tape may be punched at multiple positions. Each punched position may represent a bit value 1 and each unpunched position may represent a bit value 0. Each bit of the paper tape may be turned from 0 to 1 but never from 1 to 0. Similarly, for SLC-type flash memories, if no erasure operation is used, then each memory cell can be turned from 1 to 0, but never from 0 to 1.

The definition of WOM codes is thus as follows. Let $y_1,y_2,\ldots,y_N$ denote all the bits contained in one piece of flash memories. Each bit $y_n$ may take values 0 or 1. The WOM encoder takes the current memory states $y_1,y_2,\ldots,y_N$ and the to-be-recorded information bits $v_1,v_2,\ldots,v_M$ as inputs and outputs a codeword $x_1,x_2,\ldots,x_N$, such that $x_n=0$ if $y_n=0$ for $n=1,2,\ldots,N$. The WOM decoder takes the codeword $x_1,x_2,\ldots,x_N$ as inputs and outputs the recorded information bits $v_1,v_2,\ldots,v_M$.

WOM codes were first discussed in 1980s \cite{rivest82}. However, constructing capacity-achieving WOM codes had been an open problem for more than three decades. Surprisingly, two constructions of capacity-achieving WOM codes were proposed recently \cite{burshtein12} \cite{shipilka12}. The coding scheme in \cite{shipilka12} is based on algebraic approaches. The coding scheme in \cite{burshtein12} is based on polar codes (channel polarization).

The current paper proposes a third approach of constructing capacity-achieving WOM codes by using source polarization. The source polarization was invented by Arikan for lossless source coding \cite{arikan10}. We show in this paper that source polarization may be used for constructing WOM codes as well. We prove that the thus constructed WOM codes are capacity-achieving. The WOM codes proposed in this paper have the advantages of low-complexity software and hardware implementations.

The paper is organized as follows. In Section \ref{section_source_polar}, we will discuss the considered probability models for the WOM encoding problem. We will  prove two lemmas, which are the consequences of Theorem 1 in \cite{arikan10}. In Section \ref{section_basic_scheme}, we will present our invented WOM coding scheme. In Section \ref{section_total_var}, we will show the invented coding scheme is capacity-achieving. Finally, some concluding remarks will be presented in Section~\ref{sec_conclusion}.

We use the following notation throughout this paper. We  use
$X_{n}^{N}$ to denote a sequence of symbols $X_n, X_{n+1}, \ldots, X_{N-1}, X_{N}$. We use upper-case letters to denote random variables and lower-case letters to denote the corresponding realizations. For example, $X_i$ is one random variable, and $x_i$ is one realization of the random variable $X_i$. For a binary random variable $X$ with probabilities ${\mathbb P}(X=0)=p$, ${\mathbb P}(X=1)=1-p$, we define the entropy function $H(p)$ as follows,
\begin{align}
H(p) = p \log \left(\frac{1}{p}\right) + (1-p) \log\left(\frac{1}{1-p}\right)
\end{align}

\section{Probability Models and Auxiliary Lemmas}
\label{section_source_polar}

In this section, we will consider the following probability model of random variables $Y_{1}^N$, $X_{1}^{N}$ and $U_{1}^{N}$.  The physical meanings of $Y_{1}^{N}$, $X_{1}^{N}$ are as follows. The random variables $Y_{1}^{N}$ are the states of the memory cells. The random variables $X_{1}^{N}$ are the codeword bits. In this paper, we only consider binary cases, i.e., $Y_{1}^{N}$, $X_{1}^{N}$ are all binary random variables. However, it should be clear that the coding scheme and analysis can be easily generalized to the non-binary cases.

We assume that $Y_n$ are independent and identically distributed  (iid) with the following probability distribution $P(\cdot)$,
\begin{align}
{P}(Y_n) = \left\{
\begin{array}{ll} s, & \mbox{if } Y_n = 0 \\
1-s, & \mbox{if }Y_n = 1
\end{array}
\right.
\end{align}
where, $s$ is a real number, $0<s<1$.
Let each $X_i$ be conditionally independent of the other $X_j$ given $Y_i$.
\begin{align}
{P}(X_n|Y_n) = \left\{
\begin{array}{ll}
1,  & \mbox{if } X_n=0, Y_n=0 \\
0,  & \mbox{if } X_n=1, Y_n=0 \\
t,  & \mbox{if } X_n=0, Y_n=1 \\
1-t,  & \mbox{if } X_n=1, Y_n=1
\end{array}
\right.
\end{align}
where, $t$ is a real number, $0<t<1$.  
Let us define a matrix $G_N$ in the same way as in \cite{arikan10}.
\begin{align}
G_N = \left[\begin{array}{cc}
1 & 0 \\
1 & 1
\end{array}\right]^{\otimes n} B_N
\end{align}
where, $\otimes n$ denotes the nth Kronecker power and $B_N$ is the bit-reversal permutation. According to the definition, $G_N$ is an $N$ by $N$ square matrix. Define binary random variables $U_1, U_2, \ldots,U_N$ as follows.
\begin{align}
U_{1}^{N} = X_{1}^{N} G_N
\end{align}
where, $X_{1}^{N}$ denotes the row vector $[X_1,X_2,\ldots,X_N]$ and $U_{1}^{N}$ denotes the row vector $[U_1,U_2,\ldots,U_N]$. Clearly, $U_{1}^{N}$ is uniquely determined by $X_{1}^{N}$.

The configuration of the random variables $Y_{1}^{N}$, $X_{1}^{N}$, and $U_{1}^{N}$ is a special case of the scenarios in \cite{arikan10}, where $Y_{1}^{N}$ is the side information, $X_{1}^{N}$ is the to be compressed bit string. The following lemmas are corollaries of Theorem 1 in \cite{arikan10}.
\begin{lemma}
\label{polar_entropy_lemma}
There exists a sequence of sets $F_N$, where, $N$ are integers and go to infinity, and each $F_N$ is a subset of $\{1,2,\ldots,N\}$, such that the following hold.
\begin{itemize}
\item The cardinality $|F_N|$ of the set $F_N$ satisfies \begin{align}
\frac{|F_N|}{N} \geq (1-\epsilon_N) H(X_n|Y_n) = (1-\epsilon_N) (1-s)H(t)
\end{align}
\item For each $i\in F_N$, $H(U_i|Y_{1}^{N},U_{1}^{i-1}) \geq 1- \delta_N$
\item $\epsilon_N>0$, $\epsilon_N\rightarrow 0$ as $N\rightarrow \infty$, and
\item $\delta_N>0$, $\delta_N\rightarrow 0$ as $N\rightarrow \infty$.
\end{itemize}
\end{lemma}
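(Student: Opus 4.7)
The plan is to derive Lemma \ref{polar_entropy_lemma} directly from Arikan's source polarization theorem (Theorem 1 of \cite{arikan10}) applied to the source/side-information pair $(X_n,Y_n)$ specified by the joint distribution defined above. The only real work is (i) computing the conditional entropy $H(X_n|Y_n)$ in closed form so as to match the bound stated in the lemma, and (ii) packaging the polarization statement in the particular form needed later in the paper.

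First I would compute $H(X_n|Y_n)$ explicitly. When $Y_n=0$, $X_n$ equals $0$ deterministically, so $H(X_n|Y_n=0)=0$. When $Y_n=1$, $X_n$ is Bernoulli with $P(X_n=0|Y_n=1)=t$, hence $H(X_n|Y_n=1)=H(t)$. Averaging over $Y_n$ gives
\begin{align}
H(X_n|Y_n) \;=\; s\cdot 0 + (1-s)H(t) \;=\; (1-s)H(t),
\end{align}
which matches the right-hand side of the cardinality bound claimed in the lemma.

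Next I would invoke Theorem 1 of \cite{arikan10}. Since $(X_n,Y_n)$ is drawn iid from a discrete joint distribution and $U_1^N = X_1^N G_N$ is precisely the polar transform used by Arikan, the theorem asserts that the sequence of conditional entropies $H(U_i|Y_1^N,U_1^{i-1})$ polarizes: for every fixed $\delta>0$,
\begin{align}
\lim_{N\to\infty} \frac{\bigl|\bigl\{i : H(U_i|Y_1^N,U_1^{i-1}) > 1-\delta\bigr\}\bigr|}{N} \;=\; H(X_n|Y_n).
\end{align}
Taking $F_N = \{\,i : H(U_i|Y_1^N,U_1^{i-1}) \geq 1-\delta_N\,\}$ for a suitably chosen sequence $\delta_N\downarrow 0$ immediately delivers the second bullet, while the polarization limit combined with the formula for $H(X_n|Y_n)$ gives the cardinality bound with an appropriate $\epsilon_N\to 0$.

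The only delicate step is to couple $\delta_N$ and $\epsilon_N$ so that both vanish while the cardinality bound continues to hold. This is a routine diagonal argument: for each integer $k\geq 1$, the polarization limit applied with $\delta=1/k$ yields some threshold $N_k$ beyond which the fraction of high-entropy indices is at least $(1-1/k)(1-s)H(t)$. Setting $\delta_N=\epsilon_N=1/k$ for $N_k\leq N < N_{k+1}$ ensures both sequences tend to zero and that both bullets of the lemma are satisfied simultaneously. I do not foresee any significant obstacle beyond this bookkeeping, since all the probabilistic content is already encapsulated in Arikan's theorem.
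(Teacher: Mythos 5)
Your proposal is correct and matches the paper's intent exactly: the paper gives no separate proof of this lemma, simply asserting it as a corollary of Theorem 1 in \cite{arikan10}, and your write-up supplies precisely the routine details that assertion presupposes (the computation $H(X_n|Y_n)=(1-s)H(t)$, the invocation of the polarization limit, and the diagonal choice of $\delta_N,\epsilon_N$). No gaps worth noting beyond the implicit restriction to block lengths $N=2^n$, which the paper also leaves tacit.
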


\begin{lemma}
\label{polar_prob_lemma}
There exists a sequence of sets $F_N$, where, $N$ are integers and go to infinity, and each $F_N$ is a subset of $\{1,2,\ldots,N\}$, such that the following hold.
\begin{itemize}
\item The cardinality $|F_N|$ of the set $F_N$ satisfies \begin{align}
\frac{|F_N|}{N} \geq  (1-\epsilon_N) (1-s)H(t)
\end{align}
\item For each $i\in F_N$,
\begin{align}
\sum_{y_{1}^{N},u_{1}^{i-1}} P(y_{1}^{N},u_{1}^{i-1}) \left|P(U_i=0|y_{1}^{N},u_{1}^{i-1}) - \frac{1}{2}\right| \leq  \delta_N
\end{align}
\item $\epsilon_N>0$, $\epsilon_N\rightarrow 0$ as $N\rightarrow \infty$, and
\item $\delta_N>0$, $\delta_N\rightarrow 0$ as $N\rightarrow \infty$.
\end{itemize}
\end{lemma}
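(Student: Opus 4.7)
The plan is to take the set $F_N$ guaranteed by Lemma \ref{polar_entropy_lemma} and to show that the same set already works for Lemma \ref{polar_prob_lemma}, after possibly replacing $\delta_N$ by a quantity of the form $c\sqrt{\delta_N}$. The cardinality condition and the fact that $\epsilon_N\to 0$ are then inherited from Lemma \ref{polar_entropy_lemma} for free; only the second bullet — the $L^1$ deviation of $P(U_i=0\mid y_1^N, u_1^{i-1})$ from $1/2$ — needs work.

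The first step is to translate the conditional entropy bound into a pointwise statement. Writing $p(y_1^N,u_1^{i-1})\triangleq P(U_i=0\mid y_1^N,u_1^{i-1})$, the hypothesis of Lemma \ref{polar_entropy_lemma} reads
\begin{align}
\sum_{y_1^N,u_1^{i-1}} P(y_1^N,u_1^{i-1})\bigl[1-H\bigl(p(y_1^N,u_1^{i-1})\bigr)\bigr] \leq \delta_N.
\end{align}
For a binary random variable, $1-H(p)$ equals the KL divergence (in bits) between the Bernoulli$(p)$ distribution and the uniform Bernoulli$(1/2)$ distribution. By Pinsker's inequality (converted from nats to bits), this divergence dominates $(2/\ln 2)(p-1/2)^2$, so
\begin{align}
\bigl|p(y_1^N,u_1^{i-1})-\tfrac{1}{2}\bigr| \leq \sqrt{\tfrac{\ln 2}{2}\,\bigl(1-H(p(y_1^N,u_1^{i-1}))\bigr)}.
\end{align}

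The second step is to take expectations over $(y_1^N,u_1^{i-1})$ and apply Jensen's inequality to the concave function $\sqrt{\cdot}$:
\begin{align}
&\sum_{y_1^N,u_1^{i-1}} P(y_1^N,u_1^{i-1})\,\bigl|p(y_1^N,u_1^{i-1})-\tfrac{1}{2}\bigr| \nonumber \\
&\quad \leq \sqrt{\tfrac{\ln 2}{2}\,\sum_{y_1^N,u_1^{i-1}} P(y_1^N,u_1^{i-1})\bigl[1-H(p)\bigr]} \nonumber \\
&\quad \leq \sqrt{\tfrac{\ln 2}{2}\,\delta_N}.
\end{align}
Defining $\delta_N' \triangleq \sqrt{(\ln 2)\,\delta_N/2}$ gives the required bound, and $\delta_N'\to 0$ since $\delta_N\to 0$. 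Relabelling $\delta_N'$ as $\delta_N$ completes the proof.

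\textbf{Main obstacle.} There is no serious obstacle once one notices the identity $1-H(p)=D(p\,\|\,1/2)$ for the binary case; the proof is essentially Pinsker plus a Jensen step. The only care needed is bookkeeping the constant arising from the conversion between nats and bits in Pinsker's inequality, which merely rescales $\delta_N$ by a universal constant and does not affect either the limit $\delta_N\to 0$ or the cardinality bound inherited from Lemma \ref{polar_entropy_lemma}.
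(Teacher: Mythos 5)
Your proof is correct, and it takes a genuinely different route from the paper's. The paper argues by contradiction: it supposes the average deviation $a_N=\sum P(y_1^N,u_1^{i-1})\left|P(U_i=0\mid y_1^N,u_1^{i-1})-1/2\right|$ stays bounded away from zero for some $i\in F_N$, introduces the rounding variable $b(y_1^N,u_1^{i-1})$ to show $\sum P(y_1^N,u_1^{i-1})\,P(U_i=b\mid y_1^N,u_1^{i-1})=1/2-a_N$, and then applies Jensen's inequality to the concave entropy function together with its monotonicity on $(0,1/2)$ to conclude $H(U_i\mid Y_1^N,U_1^{i-1})\leq H(1/2-a)<1$, contradicting Lemma \ref{polar_entropy_lemma}. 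You instead give a direct, quantitative argument: the identity $1-H(p)=D(\mathrm{Bern}(p)\,\|\,\mathrm{Bern}(1/2))$, Pinsker's inequality applied pointwise, and Jensen on the concave square root yield the explicit bound $\sqrt{(\ln 2)\,\delta_N/2}$ for the same set $F_N$. The two arguments rest on the same underlying fact (conditional entropy near $1$ forces the conditional law near uniform on average), but yours has two advantages: it avoids the somewhat delicately phrased negation in the paper's contradiction setup, and it produces an explicit rate $O(\sqrt{\delta_N})$, which is relevant downstream since Lemma \ref{total_var_lemma} needs $N\delta_N\to 0$ and hence a quantitative decay, not merely $\delta_N\to 0$. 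The paper's approach, read constructively, in effect reproves a weak form of Pinsker via concavity of $H$; your version just invokes the standard inequality. Both correctly inherit the cardinality bound and $\epsilon_N\to 0$ from Lemma \ref{polar_entropy_lemma}.
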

\begin{proof}
Prove by contradiction. Suppose not. Let $F_N$ denote a sequence of sets, such that, the cardinality $|F_N|$ of the set $F_N$ satisfies \begin{align}
\frac{|F_N|}{N} \geq (1-\epsilon_N) H(X_i|Y_i) = (1-\epsilon_N) (1-s)H(t)
\end{align}
where, $\epsilon_N>0$, $\epsilon_N\rightarrow 0$ as $N\rightarrow \infty$.
Suppose that for each such sequence $F_N$ and for each $N$, there exist at least one $i$, such that
\begin{align}
\sum_{y_{1}^{N},u_{1}^{i-1}} P(y_{1}^{N},u_{1}^{i-1}) \left|P(U_i=0|y_{1}^{N},u_{1}^{i-1}) - \frac{1}{2} \right| = a_N,
\end{align}
and $a_N>0$, $a_N$ is bounded away from $0$. That is,  there exists an $a>0$, such that $a_N\geq a$ for all $N$.

Let us define one binary variable $b(y_{1}^{N},u_{1}^{i-1})$  for each configuration $y_{1}^{N},u_{1}^{i-1}$  , such that
\begin{align}
b(y_{1}^{N},u_{1}^{i-1}) = \left\{\begin{array}{ll}
0, & \mbox{if } P(U_i=0|y_{1}^{N},u_{1}^{i-1}) < 1/2 \\
1, & \mbox{otherwise}
\end{array}
\right.
\end{align}
It can be easily checked that
\begin{align}
& \sum_{y_{1}^{N},u_{1}^{i-1}} P(y_{1}^{N},u_{1}^{i-1}) \left|P(U_i=0|y_{1}^{N},u_{1}^{i-1}) - \frac{1}{2} \right|  \\
& = \sum_{y_{1}^{N},u_{1}^{i-1}} P(y_{1}^{N},u_{1}^{i-1}) \\
& \hspace{0.5in} \times \left(\frac{1}{2} - P(U_i=b(y_{1}^{N},u_{1}^{i-1})|y_{1}^{N},u_{1}^{i-1})\right) \\
& = a_N,
\end{align}
Therefore,
\begin{align}
 & \sum_{y_{1}^{N},u_{1}^{i-1}} P\left(y_{1}^{N},u_{1}^{i-1}\right)  P\left(U_i=b(y_{1}^{N},u_{1}^{i-1})|y_{1}^{N},u_{1}^{i-1}\right)  \\
 & = \frac{1}{2}-a_N
\end{align}

Then, we have the bound in Eq. \ref{aux_lemma_bound_one},  where (a) follows from the fact that the entropy function $H(t)=H(1-t)$, (b) follows from the Jensen's inequality because the entropy function $H(\cdot)$ is concave, and (c) follows from the fact that the entropy function $H(\cdot)$ is increasing in the interval $(0,1/2)$. From the above inequality, we have
$H(U_i|Y_{1}^{N},U_{1}^{i-1})$ is bounded away from $1$. This statement contradicts Lemma \ref{polar_entropy_lemma}. Hence, the supposition is false and the Lemma is true.
\end{proof}

\begin{figure*}
\begin{align}
\label{aux_lemma_bound_one}
& H(U_i|Y_{1}^{N},U_{1}^{i-1})  = \sum_{y_{1}^{N},u_{1}^{i-1}} P(y_{1}^{N},u_{1}^{i-1}) H\left(P(U_i=0|y_{1}^{N},u_{1}^{i-1})\right) \notag \\
& \stackrel{(a)}{=} \sum_{y_{1}^{N},u_{1}^{i-1}} P(y_{1}^{N},u_{1}^{i-1}) H\left(P(U_i=b(y_{1}^{N},u_{1}^{i-1})|y_{1}^{N},u_{1}^{i-1})\right) \notag \\
& \stackrel{(b)}{\leq}  H\left(\sum_{y_{1}^{N},u_{1}^{i-1}} P(y_{1}^{N},u_{1}^{i-1}) P(U_i=b(y_{1}^{N},u_{1}^{i-1})|y_{1}^{N},u_{1}^{i-1})\right)  = H(1/2-a_N)  \stackrel{(c)}{\leq} H(1/2-a)
\end{align}
\end{figure*}

\section{The Encoding and Decoding Algorithms}
\label{section_basic_scheme}

In this section, we will first present the  encoding algorithm of the invented WOM codes. The encoding algorithm is a randomized algorithm. From Lemma \ref{polar_prob_lemma}, we can see that there exists a set $F_N$, such that for each $i\in F_N$, $P(U_i|y_{1}^{N}, u_{1}^{i-1})$ is almost an uniform distribution. We call the set $F_N$ as the high-entropy set. Our invented coding algorithm uses the bits $U_i$ with $i\in F_N$ to record the to-be-recorded information. Let the cardinality $|F_N|$ of the set $F_N$ be $M$, then the number of the to-be-record bits is $M$. Let us denote the to-be-recorded bits by $v_1,v_2,\ldots,v_M$. The encoding algorithm is shown in Algorithm \ref{polar_source_algorithm}.

\begin{algorithm}
\caption{WOM encoding by Source Polarization}\label{polar_source_algorithm}
\begin{algorithmic}[1]
\State The algorithm takes inputs
\begin{itemize}
\item the current memory states of the memory cells $y_1^N$
\item the high-entropy set $F_N$
\item the to-be-recorded message $v_1^M$
\item parameter $s$
\item parameter $t$
\end{itemize}
\State{$n\gets 1, m\gets 1$}
\Repeat
  \If{ $n\in F_N$}
    \State $u_n \gets v_m$
    \State $n\gets n+1$
    \State $m\gets m+1$
  \Else
    \State Calculate $P(U_n|y_1^N,u_{1}^{n-1})$
    \State Randomly set $u_n$ according to the probability distribution $P(U_n|y_1^N,u_{1}^{n-1})$. That is
    \begin{align}
    u_n = \left\{
    \begin{array}{ll}
    0, &  \mbox{with probability } P(U_n=0|y_1^N,u_{1}^{n-1})\\ \notag
    1, &  \mbox{with probability } P(U_n=1|y_1^N,u_{1}^{n-1}) \notag
    \end{array}
    \right.
    \end{align}

    \State $n\gets n+1$

  \EndIf

\Until{$n>N$}
\State $x_{1}^{N} \gets u_{1}^{N} \left(G_N\right)^{-1}$
\State The algorithm outputs $x_{1}^{N}$ as the WOM codeword
\end{algorithmic}
\end{algorithm}

The encoding algorithm takes the inputs including the current memory states of the memory cells $y_1^N$, the high-entropy set $F_N$, the to-be-recorded message $v_1^M$, the parameter $s$ and $t$. The algorithm then determines $u_n$ one by one from $n=1$ to $n=N$. If $n\in F_N$, then $u_n$ is set to one of the to-be-recorded information bits $v_m$. If $n\notin F_N$, then $u_n$ is randomly set to 1 with probability $P(U_n=1|y_1^N,x_{1}^{n-1})$, and $u_n$ is randomly set to 0 with probability $P(U_n=0|y_1^N,x_{1}^{n-1})$. After all the $u_n$ for $n=1, \ldots, N$ have been determined, a vector $x_{1}^{N}$ is calculated as $x_{1}^{N} = u_{1}^{N} \left(G_N\right)^{-1}$. In other words, $u_{1}^{N} = x_{1}^{N} G_N$. The algorithm finally outputs $x_{1}^{N}$ as the codeword.

The recorded information can be recovered by a rather simple and low-complexity decoding algorithm as shown in Algorithm \ref{polar_source_algorithm_decoding}. Note that the auxiliary variables $u_{1}^{N} = x_{1}^{N} G_N$. And the recorded information bits $v_1,v_2,\ldots,v_M$ are the bits of $u_{1}^{N}$ at the positions in $F_N$.

\begin{algorithm}
\caption{WOM decoding by Source Polarization}\label{polar_source_algorithm_decoding}
\begin{algorithmic}[1]
\State The algorithm takes inputs
\begin{itemize}
\item the current codeword $x_1^N$
\item the high-entropy set $F_N$
\end{itemize}
\State{$u_{1}^{N}\gets x_{1}^{N}G_N$}
\State{$n\gets 1, m\gets 1$}
\Repeat
  \If{ $n\in F_N$}
    \State $v_m \gets u_n$
    \State $n\gets n+1$
    \State $m\gets m+1$
  \Else
    \State $n\gets n+1$
  \EndIf
\Until{$n>N$}
\State The algorithm outputs $v_{1}^{M}$ as the decoded information
\end{algorithmic}
\end{algorithm}

\section{Capacity Achieving Proof}
\label{section_total_var}

In this section, we will prove that the proposed WOM codes are capacity achieving. We have defined an artificial probability distribution $P$ in Section \ref{section_source_polar}.
\begin{align}
{P}(Y_i) = \left\{
\begin{array}{ll} s, & \mbox{if } Y_i = 0 \\
1-s, & \mbox{if }Y_i = 1
\end{array}
\right.
\end{align}
\begin{align}
{P}(X_i|Y_i) = \left\{
\begin{array}{ll}
1,  & \mbox{if } X_i=0, Y_i=0 \\
0,  & \mbox{if } X_i=1, Y_i=0 \\
t,  & \mbox{if } X_i=0, Y_i=1 \\
1-t,  & \mbox{if } X_i=1, Y_i=1
\end{array}
\right.
\end{align}
\begin{align}
{P}(U_{1}^{N}|Y_{1}^{N},X_{1}^{N}) = \left\{
\begin{array}{ll}
1,  & \mbox{if } U_{1}^{N}= X_{1}^{N}G_N  \\
0,  & \mbox{otherwise}
\end{array}
\right.
\end{align}

On the other hand, the randomized encoding algorithm in Section \ref{section_basic_scheme} may be considered as a random process. The random process induces a probability distribution $Q$.
\begin{align}
{Q}(Y_n) = \left\{
\begin{array}{ll} s, & \mbox{if } Y_n = 0 \\
1-s, & \mbox{if }Y_n = 1
\end{array}
\right.
\end{align}
\begin{align}
{Q}(U_i|Y_{1}^{N},U_{1}^{i-1}) = \left\{
\begin{array}{ll}
1/2,  & \mbox{if } i\in F_N \\
{P}(U_i|Y_{1}^{N},U_{1}^{i-1})  ,  & \mbox{otherwise }
\end{array}
\right.
\end{align}
\begin{align}
{Q}(X_{1}^{N}|Y_{1}^{N},U_{1}^{N}) = \left\{
\begin{array}{ll}
1,  & \mbox{if } U_{1}^{N}= X_{1}^{N}G_N  \\
0,  & \mbox{otherwise}
\end{array}
\right.
\end{align}
For example, $Q(U_n=1)$ is the probability that $u_n$ is set to $1$ by using the randomized encoding algorithm. Note that we have assumed that the to-be-record bits $v_m$ are random and equally probable.

We need the following telescoping expansion in our latter discussions, 
\begin{lemma}
\label{tele_lemma}
\begin{align}
& \prod_{n=1}^{N}A_n - \prod_{n=1}^{N} B_n \\ 
& = \sum_{i=1}^{N}  \left(\left(\prod_{n=1}^{i-1} A_n\right) (A_i-B_i)
\left(\prod_{n=i+1}^{N} B_n\right)\right)
\end{align}
\end{lemma}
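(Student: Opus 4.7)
The plan is to prove this as a purely algebraic telescoping identity, using a hybrid-sum argument rather than induction. Specifically, I would introduce the ``hybrid'' products
\begin{align}
S_i \;=\; \left(\prod_{n=1}^{i} A_n\right)\left(\prod_{n=i+1}^{N} B_n\right), \qquad i = 0,1,\ldots,N,
\end{align}
with the standard convention that an empty product equals $1$. Then $S_0 = \prod_{n=1}^{N} B_n$ and $S_N = \prod_{n=1}^{N} A_n$, so the left-hand side of the claimed identity equals $S_N - S_0$.

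Next I would write this difference as the telescoping sum $S_N - S_0 = \sum_{i=1}^{N} (S_i - S_{i-1})$ and evaluate each consecutive difference. Going from $S_{i-1}$ to $S_i$, the prefix $\prod_{n=1}^{i-1} A_n$ and the suffix $\prod_{n=i+1}^{N} B_n$ are shared; only the factor at position $i$ changes, from $B_i$ in $S_{i-1}$ to $A_i$ in $S_i$. Factoring out the common prefix and suffix yields
\begin{align}
S_i - S_{i-1} \;=\; \left(\prod_{n=1}^{i-1} A_n\right)(A_i - B_i)\left(\prod_{n=i+1}^{N} B_n\right),
\end{align}
which matches the $i$-th summand on the right-hand side exactly. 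Summing over $i = 1, \ldots, N$ completes the proof.

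An alternative route would be induction on $N$: the base case $N=1$ is immediate, and the step from $N$ to $N+1$ follows by adding and subtracting $\left(\prod_{n=1}^{N} A_n\right) B_{N+1}$ to split the difference into a new ``$i = N+1$'' term and a factor of $B_{N+1}$ times the induction hypothesis. However, the hybrid-sum argument above is cleaner and essentially bookkeeping-free. There is no genuine obstacle in this lemma since it is a bare algebraic identity; the only thing to be careful about is the empty-product convention when $i=1$ (empty prefix) and $i=N$ (empty suffix), which is handled by the definitions of $S_0$ and $S_N$ above.
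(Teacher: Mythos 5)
Your proof is correct: the hybrid products $S_i$ interpolate between the two full products, the consecutive differences $S_i - S_{i-1}$ reproduce the summands exactly, and the empty-product conventions handle the boundary cases. The paper itself states this lemma without any proof, so there is nothing to compare against; your telescoping argument is the standard and complete justification.
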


One important step of the capacity-achieving proof is the following bound on the total variation distance between the probability distributions $P$ and $Q$.
\begin{lemma}
\label{total_var_lemma}
The total variation distance between $P$ and $Q$ can be bounded as follows.
\begin{align}
& \sum_{y_{1}^{N}, u_{1}^{N}} \left|P(y_{1}^{N}, u_{1}^{N})-Q(y_{1}^{N}, u_{1}^{N})\right| \leq 2N\delta_N
\end{align}
where, $\delta_N \rightarrow 0$, as $N \rightarrow \infty$.
\end{lemma}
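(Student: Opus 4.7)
The plan is to chain-rule factor both $P$ and $Q$ over the sequence $u_1,\ldots,u_N$, apply the telescoping identity of Lemma \ref{tele_lemma} to the resulting product difference, and then use Lemma \ref{polar_prob_lemma} to control the only surviving terms. Since $Q$ places the same marginal $s,1-s$ on each $Y_n$ independently, both distributions share a common factor $P(y_1^N)=Q(y_1^N)$. Thus
\begin{align}
P(y_1^N,u_1^N)-Q(y_1^N,u_1^N) = P(y_1^N)\left[\prod_{i=1}^{N}P(u_i\mid y_1^N,u_1^{i-1}) - \prod_{i=1}^{N}Q(u_i\mid y_1^N,u_1^{i-1})\right].
\end{align}

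Next I would apply Lemma \ref{tele_lemma} with $A_i = P(u_i\mid y_1^N,u_1^{i-1})$ and $B_i = Q(u_i\mid y_1^N,u_1^{i-1})$. By construction, whenever $i\notin F_N$ we have $A_i = B_i$, so those summands vanish; only indices $i\in F_N$ contribute. For such $i$, $B_i = 1/2$ and $|A_i - B_i| = |P(U_i=0\mid y_1^N,u_1^{i-1}) - 1/2|$. Taking absolute values and summing over $(y_1^N,u_1^N)$, I then sum out the trailing coordinates $u_{i+1}^N$ against the product of $Q$ conditionals, which telescopes to $1$, and I combine the leading product of $P$ conditionals with $P(y_1^N)$ to obtain $P(y_1^N,u_1^{i-1})$.

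After these reductions the bound reads
\begin{align}
\sum_{y_1^N,u_1^N}|P-Q| \le \sum_{i\in F_N}\sum_{y_1^N,u_1^i} P(y_1^N,u_1^{i-1})\,\left|P(U_i=u_i\mid y_1^N,u_1^{i-1}) - \tfrac{1}{2}\right|.
\end{align}
The inner sum over $u_i\in\{0,1\}$ contributes a factor of $2$ because $|P(U_i=1\mid\cdot)-1/2| = |P(U_i=0\mid\cdot)-1/2|$. Lemma \ref{polar_prob_lemma} then bounds each resulting sum by $\delta_N$, so each $i\in F_N$ contributes at most $2\delta_N$. Summing over the at most $N$ indices in $F_N$ yields the stated bound $2N\delta_N$.

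The argument is almost mechanical once the factorizations are laid out; the main point to watch is the cancellation structure — making sure that $P$ and $Q$ genuinely agree both on the $Y$-marginal and on every conditional $Q(u_i\mid\cdot)$ for $i\notin F_N$, so that only the high-entropy indices survive the telescoping. The other subtle step is the normalization sum over the trailing $u_{i+1},\ldots,u_N$, which requires using $Q$ (not $P$) conditionals on the right of the telescoped term, exactly as Lemma \ref{tele_lemma} is stated.
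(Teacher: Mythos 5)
Your proposal is correct and follows essentially the same route as the paper: factor out the common $P(y_1^N)$, chain-rule both distributions, apply the telescoping identity of Lemma \ref{tele_lemma} so that only the indices in $F_N$ survive, sum out the trailing $Q$-conditionals to $1$, and invoke Lemma \ref{polar_prob_lemma} with the factor of $2$ from summing over $u_i$. You in fact make the factor-of-$2$ accounting in the final step more explicit than the paper's step (h) does.
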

\begin{proof}
We have the bound in Eq. \ref{pq_bound}.
where
\begin{itemize}
\item (a) follows from the lemma \ref{tele_lemma};
\item (b) follows from the definition of the probability distribution $Q$;
\item (c) follows from the fact the absolute value of a sum is always less than or equal to the sum of absolution values;
\item (d) is resulting from a change of the order of summation;
\item (e) is resulting from a change of the order of summation;
\item (f) follows from the fact that the summation of probability $Q$ is $1$;
\item (g) follows from the multiplication rule of probabilities and conditional probabilities; and
\item (h) follows from lemma \ref{polar_prob_lemma}
\end{itemize}
The lemma is thus proven.
\end{proof}

\begin{figure*}
\begin{align}
\label{pq_bound}
& \sum_{y_{1}^{N}, u_{1}^{N}} \left|P(y_{1}^{N}, u_{1}^{N})-Q(y_{1}^{N}, u_{1}^{N})\right| \notag \\ \notag
& = \sum_{y_{1}^{N}, u_{1}^{N}} P(y_{1}^{N}) \left| \prod_{n=1}^{N}  P(u_n | y_{1}^{N}, u_{1}^{n-1})-
\prod_{n=1}^{N} Q(u_n|y_{1}^{N}, u_{1}^{n-1})\right| \\ \notag
& \stackrel{(a)}{=} \sum_{y_{1}^{N}, u_{1}^{N}} P(y_{1}^{N}) \left| \sum_{i=1}^{N} \left[
\left(\prod_{n=1}^{i-1}  P(u_n | y_{1}^{N}, u_{1}^{n-1})\right)  \left(
P(u_i | y_{1}^{N}, u_{1}^{i-1}) - Q(u_i | y_{1}^{N}, u_{1}^{i-1})
\right)
\left(\prod_{n=i+1}^{N} Q(u_n|y_{1}^{N}, u_{1}^{n-1})\right)\right]\right| \\ \notag
& \stackrel{(b)}{=} \sum_{y_{1}^{N}, u_{1}^{N}} P(y_{1}^{N}) \left| \sum_{i\in F} \left[
\left(\prod_{n=1}^{i-1}  P(u_n | y_{1}^{N}, u_{1}^{n-1})\right)  \left(
P(u_i | y_{1}^{N}, u_{1}^{i-1}) - 1/2
\right)
\left(\prod_{n=i+1}^{N} Q(u_n|y_{1}^{N}, u_{1}^{n-1})\right)\right]\right| \\ \notag
& \stackrel{(c)}{\leq} \sum_{y_{1}^{N}, u_{1}^{N}} P(y_{1}^{N}) \sum_{i\in F} \left|
\prod_{n=1}^{i-1}  P(u_n | y_{1}^{N}, u_{1}^{n-1})  \left(
P(u_i | y_{1}^{N}, u_{1}^{i-1}) - 1/2
\right)
\prod_{n=i+1}^{N} Q(u_n|y_{1}^{N}, u_{1}^{n-1})\right| \\ \notag
& \stackrel{(d)}{=} \sum_{y_{1}^{N}, u_{1}^{N}} \sum_{i\in F}
 P(y_{1}^{N})  \left|
\left(\prod_{n=1}^{i-1}  P(u_n | y_{1}^{N}, u_{1}^{n-1})\right)  \left(
P(u_i | y_{1}^{N}, u_{1}^{i-1}) - 1/2
\right)
\left(\prod_{n=i+1}^{N} Q(u_n|y_{1}^{N}, u_{1}^{n-1})\right)\right| \\ \notag
& \stackrel{(e)}{=} \sum_{i\in F} \sum_{y_{1}^{N}, u_{1}^{i}} \sum_{u_{i+1}^{N}}
 P(y_{1}^{N})  \left|
\left(\prod_{n=1}^{i-1}  P(u_n | y_{1}^{N}, u_{1}^{n-1})\right)  \left(
P(u_i | y_{1}^{N}, u_{1}^{i-1}) - 1/2
\right)
\left(\prod_{n=i+1}^{N} Q(u_n|y_{1}^{N}, u_{1}^{n-1})\right)\right| \\ \notag
& \stackrel{(f)}{=} \sum_{i\in F} \sum_{y_{1}^{N}, u_{1}^{i}}
 P(y_{1}^{N})  \left|
\left(\prod_{n=1}^{i-1}  P(u_n | y_{1}^{N}, u_{1}^{n-1})\right)  \left(
P(u_i | y_{1}^{N}, u_{1}^{i-1}) - 1/2
\right)\right| \\ \notag
& \stackrel{(g)}{\leq} \sum_{i\in F} \sum_{y_{1}^{N}, u_{1}^{i}}
P(y_{1}^{N}, u_{1}^{i-1})  \left|
P(u_i | y_{1}^{N}, u_{1}^{i-1}) - 1/2
\right| \\ 
& \stackrel{(h)}{\leq } 2N \delta_N
\end{align}
\end{figure*}

\begin{theorem}
For the random encoding algorithm \ref{polar_source_algorithm}, assume the number of memory cells is $N$. Then, the expected ratio of the number of memory cells turning from state 1 to state 0 to $N$ goes to $(1-s)t$, as $N$ goes to infinity. The expected ratio of recorded information bits to $N$ goes to $(1-s)H(t)$, as $N$ goes to infinity. Thus, the proposed WOM codes are capacity-achieving.
\end{theorem}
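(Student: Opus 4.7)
My plan is to exploit Lemma \ref{total_var_lemma}: under the artificial distribution $P$ both quantities of interest become trivial to compute, and the $L^1$ bound between $P$ and $Q$ transfers these computations to the distribution $Q$ induced by the randomized encoder.

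For the expected fraction of $1$-to-$0$ transitions, I would introduce the test function $f(y_{1}^{N}, u_{1}^{N}) = \frac{1}{N}\sum_{n=1}^{N} \mathbf{1}\{y_n = 1,\, x_n = 0\}$ where $x_{1}^{N} = u_{1}^{N} G_N^{-1}$. The iid structure of $P$ and the definition of $P(X_n|Y_n)$ immediately give $E_P[f] = (1-s)t$. Since $0 \leq f \leq 1$,
\begin{align}
|E_Q[f] - E_P[f]| \leq \sum_{y_{1}^{N}, u_{1}^{N}} |P - Q| \leq 2N\delta_N
\end{align}
by Lemma \ref{total_var_lemma}, so $E_Q[f] \to (1-s)t$ as long as the right-hand side vanishes. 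The expected information rate equals $|F_N|/N$ because the encoder places the uniform message bits $v_1, \ldots, v_M$ at the $M = |F_N|$ high-entropy positions. Lemma \ref{polar_prob_lemma} supplies the lower bound $|F_N|/N \geq (1-\epsilon_N)(1-s)H(t)$, and a matching upper bound follows from entropy conservation, $\sum_i H(U_i | Y_{1}^{N}, U_{1}^{i-1}) = H(X_{1}^{N} | Y_{1}^{N}) = N(1-s)H(t)$, combined with the fact that each term in the sum indexed by $F_N$ is at least $1 - \delta_N$.

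Capacity-achievability then follows by matching this pair against the single-write WOM rate-state tradeoff: writing $(1-s)H(t)$ bits per cell while leaving $(1-s)(1-t)$ cells still writable lies on the optimal frontier, and iterating with suitably chosen $t$'s across successive writes recovers the Rivest-Shamir multi-write capacity.

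The main obstacle is that the bound from Lemma \ref{total_var_lemma} carries an explicit factor of $N$, so the bare convergence $\delta_N \to 0$ is insufficient; I would appeal to Arikan-Telatar strong polarization, $\delta_N \leq 2^{-N^\beta}$ for any $\beta < 1/2$, to conclude $N\delta_N \to 0$. A secondary subtlety is that $f$ is defined via $x_{1}^{N}$ whereas $P, Q$ are distributions on $(y_{1}^{N}, u_{1}^{N})$; this causes no issue because $G_N$ is invertible, so $x_{1}^{N}$ is a deterministic function of $u_{1}^{N}$ and the bound transfers without loss.
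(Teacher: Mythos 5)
Your proposal follows essentially the same route as the paper's own proof: compute the expected fraction of $1$-to-$0$ transitions exactly under the artificial distribution $P$ (giving $(1-s)t$), transfer it to the encoder's distribution $Q$ via the total-variation bound of Lemma \ref{total_var_lemma}, and read off the rate $(1-s)H(t)$ from the size of $F_N$ guaranteed by Lemma \ref{polar_prob_lemma}. The one substantive point where you go beyond the paper is also a correct one: the transferred error really is of order $N\delta_N$ rather than $\delta_N$ (the paper's Eq.~\ref{pq_exp_bound} silently drops the sum over $n$), so your appeal to the polarization rate $\delta_N \leq 2^{-N^{\beta}}$ to force $N\delta_N \rightarrow 0$ is a needed repair, not an optional refinement.
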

\begin{proof}
Let us define the following indicator function $I(x_n,y_n)$.
\begin{align}
I(x_n,y_n) = \left\{\begin{array}{ll}
1, & \mbox{if }x_n = 0, y_n = 1 \\
0, & \mbox{otherwise}
\end{array}
\right.
\end{align}
The expected number of memory cells turning from state $1$ to state $0$ is thus
\begin{align}
&{\mathbb E}_{Q}\left[\sum_{n=1}^{N} I(x_n,y_n)\right] \notag \\
& =\sum_{y_{1}^{N}, u_{1}^{N},x_{1}^{N}} Q(y_{1}^{N}, u_{1}^{N},x_{1}^{N}) \sum_{n=1}^{N} I(x_n,y_n)
\end{align}

On the other hand, we have the bound in Eq. \ref{pq_exp_bound}, where (a) follows from that fact that the absolution value of a sum is always less than or equal to the sum of absolute values; (b) follows from the fact that the indicator function $I(x_n,y_n)\leq 1$; (c) follows from the fact that $x_{1}^{N}$ is deterministic given $y_{1}^{N}$ and $u_{1}^{N}$; and (d) follows from Lemma \ref{total_var_lemma}. As a consequence, as $N$ goes to infinity,
\begin{align}
\frac{1}{N}{\mathbb E}_{Q}\left[\sum_{n=1}^{N} I(x_n,y_n)\right] \rightarrow
\frac{1}{N}{\mathbb E}_{P}\left[\sum_{n=1}^{N} I(x_n,y_n)\right]
\end{align}

On the other hand,
\begin{align}
& {\mathbb E}_{P}\left[\sum_{n=1}^{N} I(x_n,y_n)\right] \\
& = \sum_{y_{1}^{N}, u_{1}^{N},x_{1}^{N}} P(y_{1}^{N}, u_{1}^{N},x_{1}^{N}) \sum_{n=1}^{N} I(x_n,y_n) \\
& = \sum_{n=1}^{N} P(x_n=0,y_n=1) \\
& = N (1-s)t
\end{align}
Therefore, the expected number of memory cells turning from state 1 to state 0 is $(1-s)tN$ asymptotically. The number of recorded bits is the cardinality of the high-entropy set $F_N$, which is $(1-s)H(t)N$ asymptotically.  It can be checked that this is exactly the capacity of 2-write binary WOM codes \cite{rivest82} \cite{shipilka12}. 
It is not difficult to verify that the capacity region of $t$-write binary WOM codes can also be achieved by using the proposed coding scheme $t$ times, for any integer $t$. Thus the proposed coding scheme is capacity-achieving.
\end{proof}

\begin{figure*}
\begin{align}
\label{pq_exp_bound}
&\frac{1}{N}\left|{\mathbb E}_{Q}\left[\sum_{n=1}^{N} I(x_n,y_n)\right] -
{\mathbb E}_{P}\left[\sum_{n=1}^{N} I(x_n,y_n)\right]\right|  = \frac{1}{N}\left|\sum_{y_{1}^{N}, u_{1}^{N},x_{1}^{N}} \left(Q(y_{1}^{N}, u_{1}^{N},x_{1}^{N})- P(y_{1}^{N}, u_{1}^{N},x_{1}^{N})
\right) I(x_n,y_n)\right| \notag \\
& \stackrel{(a)}{\leq} \frac{1}{N}\sum_{y_{1}^{N}, u_{1}^{N},x_{1}^{N}} \left|Q(y_{1}^{N}, u_{1}^{N},x_{1}^{N})- P(y_{1}^{N}, u_{1}^{N},x_{1}^{N})
\right| I(x_n,y_n) \notag \\
& \stackrel{(b)}{\leq} \frac{1}{N}\sum_{y_{1}^{N}, u_{1}^{N},x_{1}^{N}} \left|Q(y_{1}^{N}, u_{1}^{N},x_{1}^{N})- P(y_{1}^{N}, u_{1}^{N},x_{1}^{N})
\right|  \stackrel{(c)}{\leq} \frac{1}{N}\sum_{y_{1}^{N}, u_{1}^{N}} \left|Q(y_{1}^{N}, u_{1}^{N})- P(y_{1}^{N}, u_{1}^{N})
\right|   \stackrel{(d)}{\leq} 2\delta_N
\end{align}
\end{figure*}

\section{Conclusion}

\label{sec_conclusion}

The paper presents one WOM coding scheme based on source polarization. We prove that the proposed coding scheme is capacity-achieving.

\bibliographystyle{IEEEtran}
\bibliography{the_bib}

\end{document}